\documentclass[a4paper]{article}
\usepackage{amsfonts,amsmath,amsthm}
\newtheorem{theorem}{Theorem}
\newtheorem{lemma}[theorem]{Lemma}
\newtheorem{remark}[theorem]{Remark}

\usepackage{natbib}
\usepackage{url}

\title{The existence of polyhedral invariants is undecidable for linear systems}
\author{David Monniaux}

\newcommand{\bbS}{\mathbb{S}}
\newcommand{\bbZ}{\mathbb{Z}}
\newcommand{\bbQ}{\mathbb{Q}}
\newcommand{\bbR}{\mathbb{R}}
\newcommand{\bbN}{\mathbb{N}}
\newcommand{\calV}{\mathcal{V}}
\newcommand{\ve}[1]{\mathbf{#1}}

\date{September 24, 2026}

\begin{document}
\maketitle

\begin{abstract}
  The existence of polyhedral inductive invariants suitable for proving that a given control location is unreachable is undecidable for programs using only linear arithmetic over $\bbZ$ or $\bbQ$, by reduction from $2$-counter machines.
\end{abstract}

\section{Introduction}
Static analysis by abstract interpretation~\citep{CousotCousot77,Cousot78} seeks to automatically infer program invariants within a fixed \emph{abstract domain}. These invariants may prove that a \emph{safety property} is verified or, equivalently, that some bad states are unreachable.
An invariant is a property that holds at every program step, and an inductive invariant is one where this can be shown by induction on the number of program steps: it holds initially and is preserved by every transition.
Most often, and as is the case here, the invariant is decomposed into a family of invariants per control location in the program. An invariant is \emph{suitable} for proving that a control location is unreachable if it is empty at this location.

A classical example is \emph{interval analysis}: for all variables in the program at all locations, infer intervals of variation in which the variables are guaranteed to lie.
If the abstract domain is finite, or more generally if it has no infinite ascending chains, then the usual approach is to compute the strongest inductive invariant in the abstract domain as the stationary limit of ascending Kleene iterations.
However, the domain of intervals, and more generally many domains for numerical abstractions, has infinite ascending chains.
Many heuristics have been suggested to solve static analysis problems in such circumstances: widening operators, narrowing iterations, thresholds extracted from the syntax of the program, guided iterations and so on. None of these guarantees finding the strongest inductive invariant in the domain.
These approaches are heuristic: they may fail to find an inductive invariant suitable for proving the desired safety property even if one exists in the abstract domain.

For certain abstract domains, certain safety properties and certain classes of instructions in the program, the existence of suitable inductive invariants is known to be decidable.
Consider the case where the domain is defined by a polynomial template (a formula built from conjunction, disjunction, and polynomial equalities and inequalities over program variables and some parameters), the safety condition is defined using polynomials, the program consists only of polynomial conditions and nondeterministic assignments, and the program uses variables in $\bbR$---interval analysis is the case where the template consists of inequalities $l_x \leq x \leq u_x$ for every variable~$x$.
Then, the inductiveness condition and the safety property can be expressed as a constraint on the parameters of the template in the theory of real closed fields, which is decidable by quantifier elimination.
Similarly, if everything is affine linear, if the parameters occur linearly in the template, and the program uses variables in $\bbQ$, this boils down to the theory of \emph{linear real arithmetic}, and if over $\bbZ$, to Presburger arithmetic (also known as \emph{linear integer arithmetic}), which are likewise decidable.%
\footnote{These quantifier elimination methods do not scale. Other methods, such as \emph{policy iteration}, are preferred for practical purposes.}
This includes in particular \emph{template polyhedra} (fixed $A$, parameters $b$, system of inequalities $Ax \leq b$), including \emph{octagons} (systems of $\pm x \leq b$, $\pm x \pm y \leq b$) and intervals.

The domain of convex polyhedra \citep{CousotHalbwachs78,Halbwachs_PhD} is of particular interest. A convex polyhedron is defined by a finite system of affine linear inequalities and equalities or, equivalently, a finite system of generators (vertices and, if unbounded, rays and lines)~\citep{Schrijver98}.
Note that, contrary to templates, we do not bound in advance the number of inequalities.
There exist several libraries (NewPolka / Apron, PPL, VPL\dots) implementing this domain for static analysis, including several variants of widening operators. Again, these approaches are heuristic, and fragile---for instance, the analysis may succeed in proving properties on a program, but just adding some observer computation to that program (a computation that does not feed back into the program and does not influence the safety property) may cause the analysis to fail~\citep{DBLP:journals/entcs/MonniauxG12}.
The question whether these heuristics are actually necessary because of undecidability was thus raised~\citep{Monniaux_Acta_Informatica_2018}.

It is known that, for a system of transitions defined using linear integer or rational arithmetic plus a quadratic guard, the existence of polyhedral inductive invariants suitable for proving the unreachability of a location is undecidable~\citep{Monniaux_Acta_Informatica_2018}.
\cite{DBLP:journals/jacm/FijalkowLOOPW25} proved undecidability for unguarded affine linear programs with semilinear sets as inductive invariants, and decidability for closed semilinear invariants and simple affine loops; the original problem over polyhedra has so far remained open.

In this paper, we show that the existence of polyhedral invariants is undecidable for programs with integer or rational variables and linear arithmetic only.
We assume general knowledge of results on convex polyhedra, and refer the reader to, e.g., \cite{Schrijver98}.

\section{Reduction from linear counter machines}
Let $\bbS$ be $\bbZ$ or $\bbQ$, $m \geq 1$ and $Q$ a finite control location space.
We will reduce the problem of the termination of a deterministic linear $m$-counter machine over $\bbS$ with control locations in $Q$ to that of finding inductive polyhedral invariants for a transition system with control locations in $Q$ and $m+4$ coordinates in~$\bbS$.

Let $q_{\text{start}}$ in $Q$ be the starting state.
The transitions of the counter machine are defined by a function $d: Q \times \bbS^m \rightarrow Q \cup \{\bot\}$ and a partial function $c: Q \times \bbS^m \rightarrow \bbS^m$, which need only be defined for $(q,\ve{x})$ such that $d(q,\ve{x})\neq\bot$. The symbol $\bot$ (not belonging to $Q$) indicates that execution ends.
The \emph{execution run} of the machine is defined by $q_0=q_{\text{start}}$, $\ve{x}_0=\ve{0}$, $q_{n+1}=d(q_n,\ve{x}_n)$ (if $d(q_n,\ve{x}_n) \neq \bot$) and $\ve{x}_{n+1}=c(q_n,\ve{x}_n)$. The run is finite, of length $n+1$, if $d(q_n,\ve{x}_n)=\bot$; if no such $n$ exists then it is infinite.

Assume the counter machine to be \emph{linear}: the sets $\{ \ve{x} \mid d(q,\ve{x})=q' \}$ are finite unions of polyhedra over $\bbS^m$ (for $\bbS=\bbZ$, a polyhedron over $\bbZ^m$ is the set of integer points of a rational polyhedron), and the functions $\ve{x} \mapsto c(q,\ve{x})$ are piecewise affine (with finite decomposition).

To this machine, we associate a deterministic transition system over $(Q \cup \{q_\text{bad}\}) \times \bbS^{m+4}$, with initial state $(q_{\text{start}},\ve{0},0,0,0,0)$ (we split the $m+4$ numerical coordinates into a vector of dimension $m$ and $4$ extra coordinates) and three kinds of transitions:
\begin{itemize}
\item \emph{machine} transitions $(q,\ve{x},t,y,0,0) \mapsto (d(q,\ve{x}),c(q,\ve{x}),t+1,y+t+1,t+1,y+t+1)$, where $q \in Q$ and $d(q,\ve{x}) \neq \bot$, mimic those of the counter machine;
\item \emph{rewinding} transitions $(q,\ve{x},t,y,t',y') \mapsto (q,\ve{x},t,y,t'-1,y'-t')$, where $q \in Q$ and $t' \geq 1$, rewind the last two coordinates along a parabola;
\item \emph{bad} transitions: any state $(q,\ve{x},t,y,0,y')$ where $q \in Q$ and $y' \leq -1$ maps to
  \[ (q_{\text{bad}},\ve{0},0,0,0,0). \]
\end{itemize}

Note how, as in \cite{Monniaux_Acta_Informatica_2018}, our encoding forces the concrete execution to step along parabolas.
Let $\theta(t)=\frac{t(t+1)}{2}$. One steps along a parabola $y-\theta(t)=\text{constant}$ using $(t,y) \mapsto (t+1,y+t+1)$ and $(t,y) \mapsto (t-1,y-t)$.

A state $(q,\ve{x},t,\theta(t),0,0)$, $t \in \bbN$, mimics the state at time $t$ of the counter machine.
The execution of the transition system mimics the execution of the counter machine as follows.
The transition of the counter machine at time $t$ is replaced by, in succession
\begin{itemize}
\item the machine transition
  \[ (q,\ve{x},t,\theta(t),0,0) \rightarrow
     \left(d(q,\ve{x}),c(q,\ve{x}),t+1,\theta(t+1),t+1,\theta(t+1)\right); \]
\item $t+1$ rewinding transitions
  \begin{align*}
    & \left(d(q,\ve{x}),c(q,\ve{x}),t+1,\theta(t+1),t+1,\theta(t+1)\right) \\
    \rightarrow{} & \left(d(q,\ve{x}),c(q,\ve{x}),t+1,\theta(t+1),t,\theta(t)\right) \\
    \rightarrow{} & \dots \\
    \rightarrow{} & \left(d(q,\ve{x}),c(q,\ve{x}),t+1,\theta(t+1),0,\theta(0)\right)
  \end{align*}
  (recall that $\theta(0)=0$).
\end{itemize}
The transition system terminates if and only if the counter machine terminates.
No bad transition is exercised during this run.

The difference with the reduction presented in \cite{Monniaux_Acta_Informatica_2018} is as follows. The earlier reduction also used a parabola with the execution time as abscissa, but in order to get rid of ``mixture states'' induced by convex hulls, it used the quadratic guard $y=\theta(t)$ to restrict transitions to the true states, which lie on the parabola. The rewinding gadget allows us to do away with the quadratic guard. The idea is that taking a mixture of states on a parabola $y=\theta(t)$ gives a state strictly above a parabola: a positive \emph{defect} $y - \theta(t)$ can never go away, and the machine transitions filter for zero defect.

Let $R_K$ be the bounded closed convex polyhedron generated by the vertices $\left(k,\theta(k)\right)_{k=0}^K$ and $(0,\theta(K))$ or, equivalently, the set of points above the polygonal line defined by $\left(k,\theta(k)\right)_{k=0}^K$ and with $y$-coordinate below $\theta(K)$.
Define the family of linear functions $\sigma_i(t',y')=2y'-(2i+1)t'+i^2$. Note that
\begin{itemize}
\item $y=\frac{1}{2}\left((2i+1)t-i^2\right)$ is the equation of the tangent to the $y=\theta(t)$ parabola at $t=i$, thus, with a $2$ scaling factor, $\sigma_i$ measures the gap between $y'$ and the tangent;
\item $\sigma_i(t',y') = 2(y'-\theta(t')) + (t'-i)^2$;
\item $\sigma_i(n,\theta(n))=(n-i)^2$, thus, for $0 \leq i \leq K$, $\sigma_i$ is nonnegative on $R_K$ and vanishes only at $(i,\theta(i))$---it measures a form of ``slack'';
\item $\sigma_i(t'-1, y'-t') = \sigma_{i+1}(t', y')$: one rewinding step turns the slack at index $i+1$ into the slack at index~$i$;
\item if $y' > \theta(t')$ (positive defect), then $\sigma_i(t',y') > 0$ for all~$i$.
\end{itemize}

\begin{lemma}\label{lem:closed_termination_implies_invariants}
  Assume that for all $q,q' \in Q$, $\{ \ve{x} \mid d(q,\ve{x})=q' \}$ is a finite union of \emph{closed} convex polyhedra and that the piecewise decomposition of each $c(q,\cdot)$ is according to closed convex polyhedra.
  Assume the counter machine, thus the transition system, terminates.
  Then the transition system admits bounded polyhedral inductive invariants $I_q$ such that $I_{q_\text{bad}} = \emptyset$.
\end{lemma}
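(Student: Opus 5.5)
The plan is to build each $I_q$ as the convex hull of an explicit finite set $F_q$ of points of $\bbS^{m+4}$. Such a hull is automatically a bounded polyhedron (over $\bbZ$, we take its integer points). The work is to choose the $F_q$ so that the hulls are closed under the three kinds of transitions. Since the machine terminates, the set $T_q$ of reachable states at location $q$ is finite. Every such state has the form $(q,\ve{x}_n,n,\theta(n),t',\theta(t'))$ with $0\le t'\le n$. We put $T_q\subseteq F_q$, and we set $F_{q_\text{bad}}=\emptyset$. Two geometric facts drive the argument:
\begin{itemize}
\item[(a)] every point of $F_q$ satisfies $y'\ge\theta(t')$ and $y\ge\theta(t)$, where $\theta$ is extended to all of $\bbR$;
\item[(b)] the only points of $F_q$ with $t'=0$ and $y'=0$ are the reachable "true" states $(q,\ve{x}_n,n,\theta(n),0,0)$.
\end{itemize}
Because the parabola is convex, (a) passes to convex hulls. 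It then gives $y'\ge\theta(t')\ge-\frac18>-1$ on all of $I_q$, so no bad transition is enabled and $I_{q_\text{bad}}=\emptyset$ is consistent.

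For (b) on the hull, take a convex combination with $t'=0$ and $y'=0$. Every generator lies on or above the strictly convex parabola, and $\theta(0)=0$. This forces every generator with positive weight to satisfy $t'=0$ and $y'=0$; this is the $\sigma_0$ slack argument, since $\sigma_0=2(y'-\theta(t'))+t'^2\ge0$ with equality only at the origin. So the face of $I_q$ on which machine transitions fire is $\operatorname{conv}\{(\ve{x}_n,n,\theta(n),0,0)\}$, taken over the reachable times $n$ with $q_n=q$.

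Next we close $F$ under transitions, working on closed pieces. Let $z$ be a point of that face lying in a closed piece $P$ on which $d(q,\cdot)=q'$ and $c(q,\cdot)$ agrees with an affine map $A_P$. Then the image of $z$ is the same convex combination of the points $(A_P\text{-images of the generators},\,n+1,\theta(n)+n+1,\,n+1,\theta(n)+n+1)$. Here the affine extension $A_P$ is applied to generators even when they lie outside $P$. We add these finitely many points, over all $P$ and all generators, to $F_{q'}$.

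Closedness is exactly what makes this enough. The images of boundary points of $P$ are still given by $A_P$, so $c$ really is this affine map on the whole closed face intersected with $P$. Hence the image of the face is covered by the hull of the added points.

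Now consider what happens to these new points. A genuine mixture of distinct $n$ has strictly positive defect $y-\theta(t)$. The machine transition copies this defect into $y'-\theta(t')$, and rewinding preserves $y'-\theta(t')$ because it steps along the parabola. Therefore by (b) such points, and anything obtained from them, never reach the firing face again. The closure thus stops after a single round of machine images, and (a) is preserved because the added points have defect $\ge0$.

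Rewinding is a global affine map $g$ guarded by the closed half-space $t'\ge1$. If a convex combination has $t'\ge1$, its image is the same combination of the $g$-images of the generators, even though some generators may have $t'<1$. So we add the iterates $g^k$ of every generator for $0\le k\le K$. Here $K$ exceeds the largest $t'$ occurring among the generators, so any hull point with $t'\ge1$ can be rewound at most about $K$ times. These iterates stay above the parabola, possibly at negative $t'$, where $\theta\ge-\frac18$, and all of them have non-positive defect only if they came from true states.

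I expect the main obstacle to be this bookkeeping: checking that the generating set stays finite and that (a) and (b) survive every addition. In particular, I must verify that rewinding generators into $t'<0$ never produces a point with $t'=y'=0$ other than a true state. The strict convexity bound $\sigma_0>0$ off the origin is what I would rely on for that check.
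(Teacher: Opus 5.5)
\textbf{Main gap: the machine step adds fake zero-defect states.} You add to $F_{q'}$ the points $f_{P,n}=(A_P(\ve{x}_n),n+1,\theta(n+1),n+1,\theta(n+1))$ for \emph{every} generator $(\ve{x}_n,n,\theta(n),0,0)$ of the firing face, including those with $\ve{x}_n\notin P$. These are individual points, not mixtures, and they have \emph{zero} defect in both $(t,y)$ and $(t',y')$. Your positive-defect argument covers hull points that are genuine mixtures, but not the generators you actually add.

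Rewinding $f_{P,n}$ exactly $n+1$ times gives $(A_P(\ve{x}_n),n+1,\theta(n+1),0,0)$. Your rewinding step adds this point explicitly, and any inductive set containing $f_{P,n}$ must contain it anyway. It is a zero-defect point on the firing face that is in general not a true state, so (b) fails and the closure does not stop after one round.

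This is not just bookkeeping. Take a zero test ($x_1\le 0$: identity to $q_a$; $x_1\ge 1$: decrement to $q_b$) visited once with $x_1=0$ and once with $x_1\ge 1$. You then create configurations such as $x_1=-1$ at $q_b$, or the zero branch taken with $x_1\ge 1$. An inductive set containing them contains the whole simulated run of the machine from these arbitrary configurations. If that run does not terminate, the argument of Lemma~\ref{lem:nontermination_implies_no_invariants} shows that no polyhedral inductive invariant containing it can keep $q_{\text{bad}}$ unreachable.

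The missing idea is to use the vertices of the bounded polytope formed by the points of the face whose $\ve{x}$-coordinate lies in $P$, not the generators of the face. Such a vertex is either a true state lying in $P$, which maps to its true successor, or a mixture of at least two distinct times, which has strictly positive defect. This is exactly the role of the paper's family $\calV$. There, the case $\sigma_i=0$ isolates true states and the case $\sigma_i>0$ is absorbed by the fattening constant $W$.

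\textbf{Secondary gap: closure under rewinding.} A point of the hull of $F$ with $t'\ge 1$ may put weight on some $g^K(v)$, and its image then involves $g^{K+1}(v)\notin F$. Your remark that hull points can be rewound at most about $K$ times only controls rewinds of points in the hull of the \emph{original} generators. Taking $K$ just above the largest $t'$ is also not enough by itself. For $v$ with $(t',y')=(4,10+\delta)$, the hull of $g^0(v),\dots,g^5(v)$ contains $(1,4+\delta)$, which lies on the upper edge through $(-1,\delta)$ and $(4,10+\delta)$. Its rewind $(0,3+\delta)$ lies above that edge, so it is outside the hull.

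The paper sidesteps this with an H-representation. $R_{N+1}$ is closed under rewinding: the polygonal floor is mapped into itself and $y'$ decreases. The constraints $\ve{g}_{q,j}\cdot\ve{z}\le b_{q,j}+W\sigma_i(t',y')$ are permuted by rewinding via $\sigma_i(t'-1,y'-t')=\sigma_{i+1}(t',y')$. The top index is handled by $\sigma_{N+2}\ge\sigma_{N+1}$ on $R_{N+1}$.
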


\begin{proof}
  Let $(q_t,\ve{x}_t)_{0 \leq t \leq N}$ be the run of the counter machine, where $N$ is the time at which it terminates: $d(q_N,\ve{x}_N)=\bot$.
  Let $G_q$ be the convex hull of the points $(\ve{x}_t,t,\theta(t))$ such that $q_t=q$.
  If $q$ is visited by the run, $G_q$ is nonempty, and this bounded convex polyhedron in $\bbQ^{m+2}$ can be defined as the intersection of a finite family of half-spaces $\{ \ve{z} \mid \ve{g}_{q,j} \cdot \ve{z} \leq b_{q,j} \}$, for $1 \leq j \leq H_q$. The recession cone of such a description is $\{0\}$.
  If $q$ is not visited, $G_q$ is empty, and we use
  $\pm z_k \leq -1$ for every coordinate $k$ as a description,
  which also has $\{0\}$ as recession cone.

  Let $W$ be a constant, expressing a kind of ``fattening factor''. We will collect along the proof what requirements it must fulfill. Define for all $q \in Q$
  \begin{equation}
    I_q = \{ (\ve{z}, t', y') \mid (t',y') \in R_{N+1} \land
    \bigwedge_{0 \leq i \leq N+1} \bigwedge_{j=1}^{H_q}
    \ve{g}_{q,j} \cdot \ve{z} \leq b_{q,j} + W \sigma_i(t', y') \}
  \end{equation}
  and also $I_{q_{\text{bad}}}=\emptyset$.
  The $I_q$ are defined by finitely many nonstrict inequalities and are thus closed and convex.
  The $t'$ and $y'$ coordinates are bounded due to $R_{N+1}$ being bounded, and so is $\sigma_i(t', y')$.
  The $\ve{z}$ coordinate thus lies in a polyhedron defined by a system of inequalities with the same recession cone as the description of $G_q$, which is $\{0\}$, thus is bounded.
  
  We intend that the $I_q \cap \bbS^{m+4}$ form inductive invariants; what constraints does this bring on $W$?
  A sufficient condition for the initial state $(q_{\text{start}},\ve{0},0,0,0,0)$ to be inside $I_{q_{\text{start}}}$ is that $W \geq 0$.

  Let us consider machine transitions.
  Let  $(\ve{z}, 0, 0) \in I_q$.
  Since $\sigma_0(0,0) = 0$, we have, for all $j$,
  $\ve{g}_{q,j} \cdot \ve{z} \leq b_{q,j}$;
  but that is the definition of the polyhedron $G_q$ as intersection of half-spaces.
  Using its definition as a convex hull, there exist weights
  $\lambda_n \geq 0$ for $n=0 \dots N$ such that $\sum_n \lambda_n=1$
  and
  $\ve{z} = \sum_{n=0}^N \lambda_n (\ve{x}_n, n, \theta(n))$.
  Projecting on the last two coordinates,
  $(t,y) = \sum_{n=0}^N \lambda_n (n, \theta(n))$.
  Because of the ``right stepping'' property of the parabola,
  $(t+1,y+t+1) = \sum_{n=1}^{N+1} \lambda_{n-1} (n, \theta(n))$.
  By definition of $R_{N+1}$, this implies $(t+1,y+t+1) \in R_{N+1}$.

  Let us now deal with the second part of the invariant.
  Let $q' \in Q$.
  Decompose $\ve{u} \mapsto c(q,\ve{u})$, where $d(q,\ve{u})=q'$, into a finite number of affine pieces, each with a domain of definition along a closed convex polyhedron---\emph{this is where the restriction to affine linear transitions with closed convex guards matters}.
  We are only interested in $\ve{u}$ lying in the projection of $G_q$, which is bounded.
  The function decomposition can thus be done using closed bounded components over $\bbQ^{m+2}$.
  We collect the finitely many vertices of these components into a family~$\calV$.
  Let $0 \leq i \leq N+1$, $1 \leq j \leq H_{q'}$, the linear inequality
  \begin{equation}\label{ineq:slack_ineq}
  \ve{g}_{q',j} \cdot \ve{z'} \leq b_{q',j} + W \sigma_i(t+1, y+t+1)
  \end{equation}
  is true for $\ve{z'}=(c(q,\ve{x}_{\text{from}}),t+1,y+t+1)$ such that $q \in Q$, $(\ve{x}_{\text{from}},t,y) \in G_q$, $d(q,\ve{x}_{\text{from}})=q'$,
  if it is true for all $(\ve{x}_{\text{from}},t,y)$ in~$\calV$.

  Let us distinguish in the family $\calV$ two cases
  \begin{itemize}
  \item $\sigma_i(t+1, y+t+1) = 0$, which can happen only
    if  $i=t+1$, $y+t+1=\theta(t+1)$;
    thus $y=\theta(t)$;
    but in $G_q$ the only point of the form $(\ve{x}_{\text{from}},t,\theta(t))$, if any, is for
    $\ve{x}_{\text{from}}=\ve{x}_t$ and $q=q_t$;
    then  $d(q,\ve{x}_{\text{from}})=q'$ by the definition of $\calV$,
    hence $\ve{z'}$ is $(\ve{x}_{t+1},t+1,\theta(t+1))$, and
    the inequality~\eqref{ineq:slack_ineq} becomes
    $\ve{g}_{q',j} \cdot (\ve{x}_{t+1},t+1,\theta(t+1)) \leq b_{q',j}$;
    but that is true by definition of~$G_{q'}$ as the intersection of half-spaces, since $(\ve{x}_{t+1},t+1,\theta(t+1))$ is one of the generators of~$G_{q'}$.
    
  \item $\sigma_i(t+1, y+t+1) > 0$;
    then we collect the constraint
    \begin{equation}
      W \geq \frac{\ve{g}_{q',j} \cdot (c(q,\ve{x}_{\text{from}}),t+1,y+t+1) - b_{q',j}}{\sigma_i(t+1, y+t+1)}
    \end{equation}
    which is sufficient for inequality~\eqref{ineq:slack_ineq} to hold.
  \end{itemize}
  
  Let us now consider rewinding transitions.
  Let $(\ve{z}, t', y') \in I_q$, stepping to $(\ve{z}, t'-1, y'-t')$. Since $(t',y')$ is in $R_{N+1}$ and $t' \geq 1$, $(\ve{z}, t'-1, y'-t')$ also is in~$R_{N+1}$: the $y'$ coordinate can only decrease (so remains at most $\theta(N+1)$), and the distance above the polygonal line that forms the bottom of $R_{N+1}$ remains constant.
  
  Let $1 \leq j \leq H_q$.
  For any $0 \leq i \leq N+1$, we have
  $\ve{g}_{q,j} \cdot \ve{z} \leq b_{q,j} + W \sigma_i(t', y')$.
  Because $\sigma_i(t'-1, y'-t') = \sigma_{i+1}(t', y')$, we
  have, for all $0 \leq i \leq N$,
  \begin{equation}\label{neq:stable_rewind}
  \ve{g}_{q,j} \cdot \ve{z} \leq b_{q,j} + W \sigma_i(t'-1, y'-t')
  \end{equation}
  Let us now consider the case $i=N+1$.
  $\sigma_{i+1}(t',y') - \sigma_i(t',y') = 2(i-t')+1$.
  Since $(t',y') \in R_{N+1}$, $t' \leq i$ and thus $\sigma_{i+1}(t',y') - \sigma_i(t',y') \geq 0$.
  Assuming, again, $W \geq 0$, we obtain
  $\ve{g}_{q,j} \cdot \ve{z} \leq b_{q,j} + W \sigma_i(t', y')
  \leq b_{q,j} + W \sigma_{i+1}(t', y')$.
  Since $\sigma_{i+1}(t', y') = \sigma_i(t'-1,y'-t')$,
  the inequality~\eqref{neq:stable_rewind} also holds for $i=N+1$.
  Thus, $W \geq 0$ is a sufficient condition for $I_q$ to be invariant under rewinding.

  We have thus accumulated a nonempty finite family of constraints over $W$ of the form $W \geq \textit{constant}$, we can pick $W$ as the maximum of these constants.

  Finally, let us show that these invariants imply that $q_{\text{bad}}$ is unreachable.
  Assume $q \in Q$ and a transition from a state $(\ve{z}, 0, y') \in I_q$ with $y' \leq -1$.
  By the definition of $I_q$, $(0, y') \in R_{N+1}$;
  and by the definition of $R_{N+1}$ as a convex hull of vertices with nonnegative $y$-coordinates, $y' \geq 0$.
  Such a transition is thus impossible.
\end{proof}

\begin{lemma}\label{lem:nontermination_implies_no_invariants}
  If the execution run of the counter machine is infinite, then no family of polyhedral inductive invariants exists that are suitable for proving the unreachability of the bad state in the transition system.
\end{lemma}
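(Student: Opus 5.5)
The plan is to argue by contradiction. Suppose polyhedral inductive invariants $I_q$ exist, contain every reachable state and satisfy $I_{q_{\text{bad}}}=\emptyset$. I will exhibit a state of some $I_q$ with strictly negative defect $y'-\theta(t')$. Rewinding preserves $y'-\theta(t')$, so rewinding that state down to $t'=0$ gives a state $(q,\ve{z},0,y')$ with $y'\leq -1$. Its bad transition then reaches $q_{\text{bad}}$, contradicting $I_{q_{\text{bad}}}=\emptyset$. Each $I_q$ is described by finitely many affine inequalities, and $Q$ is finite, so let $M$ bound the number of inequalities over all $q\in Q$. No quadratic guard is available to the invariant, and the argument will exploit exactly this: finitely many linear facets cannot follow the parabola closely over a long stretch.

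\textbf{Step 1 (reachable fibre).} Since the run is infinite, fix a large $u$ (say $u\geq 4M+4$). By the description of the simulated run, the states $(q_u,\ve{x}_u,u,\theta(u),s,\theta(s))$ for $0\leq s\leq u$ are all reachable: one machine transition followed by rewinding steps. They all lie in $I_{q_u}$. Write $I_{q_u}=\{(\ve{z},\ve{w}) \mid A\ve{z}+B\ve{w}\leq \ve{b}\}$ with $\ve{w}=(t',y')$ and at most $M$ rows. Fix $\ve{z}_u=(\ve{x}_u,u,\theta(u))$, which is integral, and consider the slice $S_u=\{\ve{w} \mid B\ve{w}\leq \ve{b}-A\ve{z}_u\}$. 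It is a convex polygon with at most $M$ edges, and it contains the $u+1$ parabola points $(s,\theta(s))$, $0\leq s\leq u$.

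\textbf{Step 2 (a long flat piece of the lower boundary).} Over $[0,u]$, the lower boundary of $S_u$ is a convex piecewise-affine function $\ell$ (or $-\infty$, in which case the next step is immediate). It has at most $M$ affine pieces, each supported by one row of $B$. By pigeonhole, since $u+1>4M$, one affine piece covers at least $5$ consecutive integers $r_0,\dots,r_0+4$. On these integers, $g(r)=\theta(r)-\ell(r)$ is nonnegative, because the parabola points lie in $S_u$. It is also the restriction of a quadratic with second difference exactly $1$, because $\ell$ is affine there.

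\textbf{Step 3 (a point of negative defect).} A sequence with second differences $1$ that is nonnegative on $5$ consecutive integers has some $r$ in that range at distance $\geq 2$ from its minimiser. At that $r$ the value is at least $0+0+1$: after the minimiser, the first difference is $\geq 0$ and the next is $\geq 1$, and symmetrically on the other side. So $\ell(r)\leq \theta(r)-1$. By convexity, the vertical segment from $(r,\ell(r))$ to $(r,\theta(r))$ lies in $S_u$. Hence the integral state $(q_u,\ve{z}_u,r,\theta(r)-1)$ lies in $I_{q_u}$, and this works for both $\bbS=\bbZ$ and $\bbS=\bbQ$. Applying $r$ rewinding steps, which are allowed since the $t'$ coordinate stays $\geq 1$ before each step, yields $(q_u,\ve{z}_u,0,-1)\in I_{q_u}$ by inductiveness. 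The bad transition then puts $(q_{\text{bad}},\ve{0},0,0,0,0)$ into $I_{q_{\text{bad}}}$, which is the contradiction.

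The step I expect to be most delicate is Step 2. I must make sure that the lower boundary of the slice really splits into at most $M$ affine pieces on $[0,u]$, with a single affine function on each piece, so that the discrete second-difference argument applies. I also need to handle the degenerate cases: a slice unbounded below, where any point below the parabola is available directly, and vertical edges, which do not affect the lower envelope on the interior of an interval. The rest is the routine check that rewinding preserves membership, which holds by inductiveness.
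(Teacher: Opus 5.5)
Your proof is correct, and it takes a genuinely different route from the paper's. The paper picks a location $q$ visited infinitely often and notes that the reachable points $(\ve{x}_t,t,\theta(t),t,\theta(t))$ make $I_q$ unbounded in the $t'$ direction. It then extracts an integral recession vector $\ve{r}$ with positive $t'$-component. Along $\ve{A}+\lambda\ve{r}$ the defect $y'-\theta(t')$ tends to $-\infty$, because $\theta$ is quadratic while the ray is affine. The final rewinding and bad-transition step is the same as yours.

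You instead use the whole rewinding fibre $(s,\theta(s))_{0\le s\le u}$ at a single time $u$ and restrict to the two-dimensional slice at $\ve{z}_u$. There you show that a lower envelope made of at most $M$ affine pieces cannot stay within distance $1$ of $u+1>4M$ lattice points of the parabola. The step you flagged as delicate goes through. The slice is convex and contains $(0,0)$ and $(u,\theta(u))$, so every fibre over $[0,u]$ is nonempty. Hence the lower envelope there is exactly the maximum of the affine lower bounds coming from rows with negative $y'$-coefficient, and the set where a given one of them attains the maximum is an interval.

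The paper's argument is shorter and relies on standard recession-cone facts, but it needs infinitely many reachable points at one location. Yours is elementary and finitary, and it proves slightly more. It only uses that the run reaches some time $u$ with $u+1>4M$. So it shows that a suitable invariant must have at least $(u+1)/4$ constraints at location $q_u$: its size must grow at least linearly with the length of the run. This complements the construction of Lemma~\ref{lem:closed_termination_implies_invariants}, where $R_{N+1}$ contributes about $N$ facets.
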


\begin{proof}
  Assume the existence of such polyhedral invariants~$I_q$.
  Let $(q_t,\ve{x}_t)_{t \in \bbN}$ be the run of the counter machine.
  Since the run is infinite, there is (at least) one control location $q$ that it visits infinitely often.
  The infinitely many points $(\ve{x}_t,t,\theta(t),t,\theta(t))$ such that $q_t=q$ are all reachable by the transition system and thus belong to~$I_q$;
  call $\ve{A}$ one of these points.
  $I_q$ is thus unbounded in the $(\ve{0},0,0,1,0)$ direction.

  The recession cone of $I_q$, considered as a rational polyhedron, must thus contain a vector $\ve{r}$ such that
  $(\ve{0},0,0,1,0) \cdot \ve{r} > 0$;
  without loss of generality, we can assume it to have integer coordinates.
  Call the last two coordinates of the transition system $t'$ and $y'$.
  For all $\lambda \in \bbN$, $\ve{A}+\lambda \ve{r}$ is also in~$I_q$. 
  Along $\ve{A}+\lambda \ve{r}$, as $\lambda$ increases, $t'$ grows linearly with $\lambda$, thus $\theta(t')$ grows quadratically, yet $y'$ changes linearly, thus the defect $y' - \theta(t')$ goes to $-\infty$. In particular, we can choose $\lambda$ so that the defect is at most $-1$ (note that $t'$ is an integer, since $\ve{A}$ and $\ve{r}$ are integral).
  Since the defect is preserved when applying rewinding transitions, rewinding from such a $\ve{A} + \lambda \ve{r}$ will eventually reach $t'=0$ and $y' \leq -1$, enabling a transition to the bad state, which contradicts the assumption of suitability.
\end{proof}

\begin{theorem}
  The existence of polyhedral inductive invariants suitable for showing that a control location is unreachable is undecidable for programs with integer or rational variables and linear arithmetic, even when simultaneously restricted to deterministic programs, closed transition guards (non-strict inequalities), and $6$ numerical variables.
  The same applies to bounded polyhedral invariants.
\end{theorem}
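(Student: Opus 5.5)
We reduce from the halting problem for deterministic $2$-counter (Minsky) machines, which is undecidable. Such a machine has counters $x_1,x_2 \in \bbN$ and instructions ``increment $x_k$ and go to $q'$'' and ``if $x_k = 0$ go to $q'$, else decrement $x_k$ and go to $q''$''. We take $m=2$, so the transition system built in this section has $m+4=6$ numerical coordinates.

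First we check that a Minsky machine is a linear counter machine in the sense of this section, with closed guards. Over $\bbS=\bbZ$, the zero test $x_k=0$ is the closed polyhedron $\{x_k \le 0, x_k \ge 0\}$ and its complement within the reachable region is $x_k \ge 1$, which is also closed. Every $c(q,\cdot)$ is affine ($x_k \pm 1$ or the identity) on each closed piece. Over $\bbS=\bbQ$ we do the same: the counters only take values in $\bbN$ along the run, and the guards $x_k \le 0$ and $x_k \ge 1$ (writing the first as $x_k \le 0 \land x_k \ge 0$, or even just $x_k \le 0$) cover all integer values. The points with $0<x_k<1$ are never reached, so we may send them to $\bot$, i.e.\ leave them uncovered, without changing the run. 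The run of the machine from $\ve{0}$ is therefore exactly the Minsky run, and the machine terminates iff the Minsky machine halts. The associated transition system is deterministic, and all its guards are non-strict: $t' \ge 1$, $t'=0$, $y' \le -1$, and the closed guards of the machine transitions. If the bad transition needs $t'=0 \land y'\le -1$ to be disjoint from rewinding, that is ensured by $t' \ge 1$ versus $t'=0$.

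Next we combine the two lemmas. If the Minsky machine halts, Lemma~\ref{lem:closed_termination_implies_invariants} applies because the hypotheses on closed polyhedral guards hold. It yields bounded polyhedral inductive invariants with $I_{q_\text{bad}}=\emptyset$, and these are in particular polyhedral invariants. If the machine does not halt, Lemma~\ref{lem:nontermination_implies_no_invariants} shows that no polyhedral invariants exist, hence no bounded ones either. So the map from the Minsky machine to the transition system is a computable many-one reduction from halting, simultaneously for the problem ``existence of polyhedral invariants'' and for ``existence of bounded polyhedral invariants''. Both problems are therefore undecidable, over $\bbZ$ and over $\bbQ$.

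The main point needing care is the $\bbQ$ case of the encoding: the counter machine's guards must be closed polyhedra, yet the zero/nonzero dichotomy is naturally strict. We resolve this as above with the guards $x_k\le 0$ and $x_k\ge 1$, noting that the partiality of $d$ is allowed, since $\bot$ simply means execution ends. We also need to confirm that Lemma~\ref{lem:nontermination_implies_no_invariants} used nothing about integrality of the counters, only of $t',y'$, which are integers along reachable states and along $\ve{A}+\lambda\ve{r}$ with $\ve{r}$ integral. Finally, the transition system is a program with control locations $Q\cup\{q_\text{bad}\}$ and linear guarded assignments, so it fits the stated program class.
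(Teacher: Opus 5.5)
Your proposal is correct and follows the paper's proof. You instantiate the reduction with a $2$-counter Minsky machine ($m=2$, hence $6$ variables) and combine Lemma~\ref{lem:closed_termination_implies_invariants} (halting yields bounded invariants) with Lemma~\ref{lem:nontermination_implies_no_invariants} (non-halting rules out all polyhedral invariants) and the undecidability of Minsky halting; your explicit closed guards $x_k \le 0$ and $x_k \ge 1$ over $\bbQ$ spell out what the paper compresses into ``the same machine definition also works over $\bbQ$''.
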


\begin{proof}
  From Lemmas \ref{lem:closed_termination_implies_invariants} and \ref{lem:nontermination_implies_no_invariants}, and the well-known undecidability of termination of Minsky machines over $\bbZ$~\citep{Minsky67}. The same machine definition also works over~$\bbQ$.
\end{proof}

\begin{remark}
  These problems are semidecidable, and thus, by the above reduction, complete for recursively enumerable problems.
\end{remark}

\begin{proof}
  Enumerate all candidate inductive invariants, stop when one is suitable: the inductiveness check is satisfiability modulo linear (integer or rational) arithmetic.
\end{proof}

\section{Conclusion}
The existence of polyhedral invariants is undecidable for programs over linear arithmetic.
The reduction from counter machines improves on one found in \cite{Monniaux_Acta_Informatica_2018} by doing away with a quadratic guard.
General inference approaches for such invariants are thus necessarily incomplete (heuristic) or restricted to specific subclasses of polyhedra.

Our reduction uses $6$ numerical variables, and as many control locations as necessary to show that termination is undecidable for $2$-counter machines.
We can reduce the number of control locations to $1$ (not counting the bad state) by encoding control into an extra numerical dimension (thus going to~$7$).
The problem remains open for dimensions between $2$ and~$5$ included (dimension $1$ is intervals, thus decidable).

\bibliographystyle{plainnat}
\bibliography{polyhedral_invariants_are_undecidable}

\begin{thebibliography}{9}
\providecommand{\natexlab}[1]{#1}
\providecommand{\url}[1]{\texttt{#1}}
\expandafter\ifx\csname urlstyle\endcsname\relax
  \providecommand{\doi}[1]{doi: #1}\else
  \providecommand{\doi}{doi: \begingroup \urlstyle{rm}\Url}\fi

\bibitem[Cousot and Cousot(1977)]{CousotCousot77}
P.~Cousot and R.~Cousot.
\newblock Abstract interpretation: a unified lattice model for static analysis
  of programs by construction or approximation of fixpoints.
\newblock In \emph{Conference Record of the 4th ACM Symposium on Principles of
  Programming Languages}, pages 238--252, Los Angeles, CA, January 1977.
\newblock \doi{10.1145/512950.512973}.

\bibitem[Cousot(1978)]{Cousot78}
Patrick Cousot.
\newblock \emph{Mé\-tho\-des ité\-ra\-ti\-ves de cons\-truc\-tion et
  d'ap\-pro\-xi\-ma\-tion de points fi\-xes d'opé\-ra\-teurs mo\-no\-to\-nes
  sur un treil\-lis, ana\-ly\-se sém\-an\-ti\-que de pro\-gram\-mes}.
\newblock Thè\-se d'état ès scien\-ces ma\-thé\-ma\-ti\-ques,
  Uni\-ver\-si\-té scien\-ti\-fi\-que et mé\-di\-ca\-le de Gre\-no\-ble,
  Grenoble, France, 21 mars 1978.

\bibitem[Cousot and Halbwachs(1978)]{CousotHalbwachs78}
Patrick Cousot and Nicolas Halbwachs.
\newblock Automatic discovery of linear restraints among variables of a
  program.
\newblock In \emph{Conference Record of the Fifth Annual ACM Symposium on
  Principles of Programming Languages (POPL '78)}, pages 84--96, Tucson, AZ,
  January 1978. ACM Press.
\newblock \doi{10.1145/512760.512770}.

\bibitem[Fijalkow et~al.(2025)Fijalkow, Lefaucheux, Ohlmann, Ouaknine, Pouly,
  and Worrell]{DBLP:journals/jacm/FijalkowLOOPW25}
Nathana{\"{e}}l Fijalkow, Engel Lefaucheux, Pierre Ohlmann, Jo{\"{e}}l
  Ouaknine, Amaury Pouly, and James Worrell.
\newblock On the {M}onniaux problem in abstract interpretation.
\newblock \emph{J. {ACM}}, 72\penalty0 (2):\penalty0 11:1--11:51, 2025.
\newblock \doi{10.1145/3704632}.

\bibitem[Halbwachs(1979)]{Halbwachs_PhD}
Nicolas Halbwachs.
\newblock \emph{Détermination automatique de relations linéaires vérifiées
  par les variables d'un programme}.
\newblock PhD thesis, Université scientifique et médicale de Grenoble \&
  Institut national polytechnique de Grenoble, 1979.
\newblock URL \url{https://theses.hal.science/tel-00288805v1}.

\bibitem[Minsky(1967)]{Minsky67}
Marvin~L. Minsky.
\newblock \emph{Computation: Finite and Infinite Machines}.
\newblock Prentice-Hall, Englewood Cliffs, NJ, 1967.

\bibitem[Monniaux(2019)]{Monniaux_Acta_Informatica_2018}
David Monniaux.
\newblock On the decidability of the existence of polyhedral invariants in
  transition systems.
\newblock \emph{Acta Informatica}, 56\penalty0 (4):\penalty0 385--389, 2019.
\newblock ISSN 0001-5903.
\newblock \doi{10.1007/s00236-018-0324-y}.
\newblock URL \url{https://hal.archives-ouvertes.fr/hal-01587125/}.

\bibitem[Monniaux and Le~Guen(2012)]{DBLP:journals/entcs/MonniauxG12}
David Monniaux and Julien Le~Guen.
\newblock Stratified static analysis based on variable dependencies.
\newblock In Damien Mass{\'{e}} and Laurent Mauborgne, editors,
  \emph{Proceedings of the Third International Workshop on Numerical and
  Symbolic Abstract Domains, NSAD@SAS 2011, Venice, Italy, September 13, 2011},
  volume 288 of \emph{Electronic Notes in Theoretical Computer Science}, pages
  61--74. Elsevier, 2012.
\newblock \doi{10.1016/J.ENTCS.2012.10.008}.

\bibitem[Schrijver(1998)]{Schrijver98}
Alexander Schrijver.
\newblock \emph{Theory of Linear and Integer Programming}.
\newblock Wiley, 1998.
\newblock ISBN 0-471-98232-6.

\end{thebibliography}
\end{document}